\newcommand{\<}{\langle}
\renewcommand{\>}{\rangle}
\newcommand{\C}{\mathbb{C}}
\newcommand{\cH}{\mathcal{H}}
\newcommand{\cN}{\mathcal{N}}
\newcommand{\ii}{\mathbb{I}} 
\newcommand{\be}{\begin{equation}}
\newcommand{\ee}{\end{equation}}
\newcommand{\bq}{\begin{eqnarray}}
\newcommand{\eq}{\end{eqnarray}}
\newcommand{\ba}{\begin{align}}
\newcommand{\ea}{\end{align}}
\newcommand{\ket}[1]{\left | \, #1 \right\rangle}
\newcommand{\bra}[1]{\left \langle #1 \, \right |}
\newcommand{\proj}[1]{\ket{#1}\bra{#1}}
\newcommand{\Sp}{\,\,\,\,\,\,}
\newcommand{\per}{\mathrm{per}}
\newtheorem{definition}{Definition}
\newtheorem{theorem}{Theorem}
\newtheorem{lemma}{Lemma}
\newtheorem{cor}{Corollary}
\begin{document}
\title{Efficient Computation of the Permanent of Block Factorizable Matrices}
\author{Kristan Temme\thanks{Center for Theoretical Physics, Massachusetts Institute of Technology, Boston, USA; \texttt{kptemme@mit.edu}} \quad\quad Pawel Wocjan\thanks{Mathematics Department and Center for Theoretical Physics, Massachusetts Institute of Technology, Boston, USA; on sabbatical leave from Department of Electrical Engineering and Computer Science, University of Central Florida, Orlando, USA; \texttt{wocjan@eecs.ucf.edu}}}
\date{August 31, 2012}

\maketitle

%
%
	
\begin{abstract}
We present an efficient algorithm for computing the permanent for matrices of size $N$ that can written as a product of $L$
block diagonal matrices with blocks of size at most $2$. For fixed $L$, the time and space resources scale linearly in $N$, with 
a prefactor that scales exponentially in $L$. This class of matrices contains banded matrices with banded inverse. We show that such a factorization into a product of block diagonal matrices gives rise to a circuit acting on a Hilbert space with a tensor product structure and that the permanent is equal to the transition amplitude of this circuit and a product basis state.  In this correspondence, a block diagonal matrix gives rise to one layer of the circuit, where each block to a gate acting either on a single tensor component or on two adjacent tensor components.  This observation allows us to adopt matrix product states, a computational method from condensed matter physics and quantum information theory used to simulate quantum systems, to evaluate the transition amplitude.
\end{abstract}


\section{Introduction}

For an arbitrary matrix $A\in M_{N\times N}(\C)$, its permanent is defined by
\begin{equation}
\per(A) = \sum_{\pi\in S_N} \prod_{i=1}^N A_{i,\pi(i)},
\end{equation}
where $S_N$ denotes the permutation group of the set $\{1,\ldots,N\}$.
The fastest known exact algorithm for computing the permanent of general matrices is due to Ryser \cite{Ryser63}, with running time $O(N 2^N)$.
Valiant showed that computing the permanent is \#P-hard \cite{Valiant79}.  Therefore, it is unlikely that there exists an efficient algorithm and attention has been given to approximating the permanent or computing it exactly only for restricted classes of matrices with special structure.  

We start by reviewing the approximation results.  For an arbitrary matrix $A\in M_{N\times N}(\C)$, Gurvits  \cite{Gurvits05} showed how to efficiently obtain an additive approximation $\tilde{p}\in \C$ such that
\[
| \tilde{p} - \mathrm{per}(A) | \le \varepsilon \cdot \|A\|^N
\]
holds with high probability, where $\|A\|$ is the matrix norm given by the largest singular eigenvalue of $A$.  The running time scales polynomially in $N$ and $1/\varepsilon$.

For an arbitrary matrix $A$ having only non-negative numbers as entries, Jerrum and Sinclair \cite{JS89} showed that it is possible to obtain a multiplicative approximation of the permanent.  More precisely, for $\varepsilon>0$, it is possible to obtain $\tilde{p}$ such that
\[
| \tilde{p} - \mathrm{per}(A) | \le \varepsilon \cdot \mathrm{per}(A).
\]
The running time scales polynomially in $n$ and $1/\varepsilon$.  A significantly faster algorithm for this problem was presented in \cite{BSVV08}.

In this article, we present a polynomial time algorithm for exactly computing the permanent for a new class of matrices.
We give a brief overview of some classes of matrices for which algorithms are known whose running time is polynomial or faster than that of Ryser's algorithm.

For any matrix with at most $c N$ nonzero entries, the permanent can be computed in time $O^*((2 - \varepsilon)^N)$ time, with 
$\varepsilon$ depending on $c$ \cite{SW05}.

Notable classes of matrices with special structure are those of circulant and Toeplitz matrices.  Some explicit solutions or recurrence relations for some $(0,1)$ circulant matrices are given in \cite{KP69,MSS69,Minc85,Minc87}.  Other $(0,1)$ circulant and very sparse Toeplitz matrices are discussed in \cite{CCR97}. 

Efficient algorithms can also be found for particular classes of banded matrices.  
A matrix $A$ is banded with bandwidth $w$ whenever $a_{ij}=0$ for $|i-j|>w$.  A banded matrix with bandwidth $w=1$ is called a tridiagonal matrix.  For a tridiagonal matrix, the permanent can be computed efficiently with the help of a recurrence relation \cite{daTonseca10}.  For a Toeplitz matrix $A$ of bandwith $w$, there is an algorithm that computes $\per(A)$ in time $O\left({2w \choose w}^3 \log N\right)$ \cite{Schwartz09}.

A further class is that of matrices whose permanent may be computed by the determinant of a matrix of the same size. This was established in the work of Temperley and Fisher \cite{TM60} and in the work of Kasteleyn \cite{Kasteleyn61,Kasteleyn67}, which were motivated by a problem in statistical mechanics.

The key ideas underlying the algorithm presented here are motivated by methods used in quantum information theory. We express the  permanent as the transition amplitude in a quantum circuit and apply matrix product states to evaluate this amplitude. A further example where matrix product states has been applied successfully to problems outside the field of quantum information theory is given in \cite{CM12}. The authors present an improved algorithm for counting the number of satisfying inputs of certain classical circuits.

We obtain an efficient algorithm for the new class of matrices of block factorizable matrices:

%
%

\bigskip
\noindent {\bf Main result:}
Assume that for $A \in M_{N\times N}(\C)$ we have $A=F_1 F_2 \cdots F_L$, where each of the factors $F_i$ is block diagonal with 2 by 2 or 1 by 1 blocks.  Given such factorization, the permanent of $A$ can be computed in time $O(N 2^{3L^2})$ and space $O(N 2^{2L^2})$.

\bigskip
Observe that any matrix $A$ that posses such factorization must be banded with bandwidth $L$. Further, if $A$ is invertible, then the inverse $A^{-1}$ must also be banded with bandwidth $L$.  


The problem of decomposing banded matrices with banded inverses into block diagonal matrices was studied by Strang \cite{Strang11_LPU,Strang11}.  The results therein show that any invertible matrix $A$ can be factored into $L=2 w(2w+1)$ such factors provided that $A$ and $A^{-1}$ are both banded with bandwidth $w$.
We later describe briefly how such decomposition can be computed efficiently. 

A problem which makes the computation of the permanent such a difficult task is the absence of simplification rules present for the determinant. 
The determinant already by construction obeys a multiplication rule, i.e. $\det(AB) = \det(A)\det(B)$. This rule does not hold for 
the permanent of a matrix product.  However, the permanent obey the following simplification rules involving permutation matrices $P$ and diagonal 
matrices $D$
\be
  \per(P  A  D) = \per(P) \; \per(A) \; \per(D) \Sp \mbox{and} \Sp \per(P A P') = \per(A).
\ee
These rules allow us to use the Cuthill--McKee algorithm to reduce the bandwidth of the matrix $A$ \cite{CutKee69} without changing its permanent.  Recall that the Cuthill--McKee algorithm suitably permutes the rows and columns of the sparse matrix to reduce its bandwidth.
 
%
%

\section{Permanent as the transition amplitude in a quantum circuit}

\subsection{Multivariate polynomials and the permanent}

We now derive a new alternative expression for the permanent making use of multivariate polynomials.  The importance of this expression is that our algorithm can be used to directly compute this expression.   The idea for this expression is motivated by the appearance of the permanent in quantum optics.  

Let $R=\C[X_1,\ldots,X_N]$ be the polynomial ring in the $N$ variables $X_1,\ldots,X_N$ over the field $\C$ of complex numbers. The variable $X_i$ corresponds to the creation operator $a_i^\dagger$ of a photon in the $i$th mode.  It is well-known that the permanent of a unitary matrix $U$ can be expressed as the amplitude $\<\hat{1}|\Phi_U|\hat{1}\>$, where $|\hat{1}\>$ is the Fock state in which there is exactly one photon in each of the $N$ modes, that is, $|\hat{1}\>=a_1^\dagger\cdots a_N^\dagger|\mathrm{vac}\>$ and $\Phi_U$ is the unitary describing the transformation on the Fock space induced by the transformation $(a_1^\dagger,\ldots,a_N^\dagger)\mapsto (a_1^\dagger,\ldots,a_N^\dagger) U$ of the creation operators.  This was observed in \cite{Scheel04,AA10}.

We abstract from the specific details pertaining to quantum optics and rely solely on multivariate polynomials to derive the new expression.  
This approach is more straightforward and better suited to obtaining and analyzing our algorithm.

\begin{definition}\label{def:phi}
Let $R=\C[X_1,\ldots,X_N]$. To an arbitrary matrix $A\in M_{N\times N}(\C)$, we associate the unique map $\Phi_A : R \rightarrow R$ by defining its action on the variables $X_1,\ldots,X_N$ by
\[
\Phi_A(X_i)=\sum_{j=1}^N A_{ij} X_j \quad\mbox{for $i=1,\ldots,N$}
\]
and lifting it to all polynomials by requiring that it acts additively and multiplicatively with respect to polynomial addition and multiplication, respectively, that is,
\[
\Phi_A(p +      q) = \Phi_A(p)+\Phi_A(q) \quad\mbox{and}\quad
\Phi_A(p \cdot q) = \Phi_A(p) \cdot \Phi_A(q) 
\]
for all $p,q\in R$.\footnote{$\Phi_A$ is an endomorphism of the ring $R$.}
\end{definition}

Observe that $\Phi_A$ preserves the total degree of polynomials.  Moreover, homogeneous polynomials are mapped onto homogeneous polynomials.  To express this more formally, we introduce the grading of $R$ with respect to total degree $d\ge 0$:
\begin{eqnarray}
\cN^{(d)} & = & \left\{ n=(n_1,\ldots,n_N)\in\mathbb{N}^N \mid \sum_{i=1}^N n_i = d \right\}, \\
R^{(d)} & = & \sum_{n\in\cN^{(d)}} c_n X^n,
\end{eqnarray}
where we use the shorthand notation $X^n=X_1^{n_1} X_2^{n_2} \cdots X_N^{n_N}$.  We have $\Phi_A(R^{(d)})\subseteq R^{(d)}$ for $d\ge 0$, with equality iff $A$ is invertible.

%
%

\begin{lemma}
The permanent of $A$ is related to $\Phi_A$ as follows:
\begin{equation}
\Phi_A(X_1 X_2 \cdots X_N) = \per(A) X_1 X_2 \cdots X_N + \sum_{n \in \cN^{(N)}\setminus(1,1,\ldots,1)} 
c_n X^n.
\end{equation}
\end{lemma}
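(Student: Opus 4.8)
The plan is to exploit the multiplicativity of $\Phi_A$ established in Definition~\ref{def:phi}. Since $X_1 X_2 \cdots X_N$ is a product of the variables, I would first write
\[
\Phi_A(X_1 X_2 \cdots X_N) = \prod_{i=1}^N \Phi_A(X_i) = \prod_{i=1}^N \left( \sum_{j=1}^N A_{ij} X_j \right).
\]
The entire statement then reduces to expanding this product and reading off the coefficient of the square-free monomial $X_1 X_2 \cdots X_N$.

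Next I would expand the product over $i$ by choosing one summand from each factor. This produces a sum indexed by all functions $f : \{1,\ldots,N\} \to \{1,\ldots,N\}$, namely
\[
\prod_{i=1}^N \left( \sum_{j=1}^N A_{ij} X_j \right) = \sum_{f} \left( \prod_{i=1}^N A_{i,f(i)} \right) X_{f(1)} X_{f(2)} \cdots X_{f(N)}.
\]
The key observation, which I regard as the heart of the argument, is to determine exactly which functions $f$ contribute the monomial $X_1 X_2 \cdots X_N$. The monomial $X_{f(1)} \cdots X_{f(N)}$ equals $X_1 \cdots X_N$ precisely when each variable $X_j$ appears exactly once in the product, i.e.\ when $f$ is a bijection. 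Thus the contributing functions are exactly the permutations $\pi \in S_N$, while every non-injective $f$ yields some exponent vector $n \in \cN^{(N)}$ with a repeated index, hence $n \neq (1,1,\ldots,1)$.

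Collecting the terms accordingly, the coefficient of $X_1 X_2 \cdots X_N$ is $\sum_{\pi \in S_N} \prod_{i=1}^N A_{i,\pi(i)}$, which is by definition $\per(A)$, and all remaining terms are absorbed into the sum over $\cN^{(N)} \setminus (1,1,\ldots,1)$, giving the claimed identity. I expect no genuine obstacle here: the only point requiring care is the bijectivity characterization of which $f$ produce the square-free monomial, together with the bookkeeping that each permutation is counted exactly once. This is precisely where the permanent (rather than, say, the determinant) emerges, since $\Phi_A$ is multiplicative but carries no sign, so every permutation contributes with coefficient $+1$.
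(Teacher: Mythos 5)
Your proof is correct and follows the same route as the paper: the paper likewise uses multiplicativity of $\Phi_A$ to write $\Phi_A(X_1\cdots X_N)$ as the product $\prod_{i=1}^N\left(\sum_{j=1}^N A_{ij}X_j\right)$ and then reads off the coefficient of $X_1X_2\cdots X_N$ ``by inspection.'' Your argument simply makes that inspection explicit, by observing that the contributing index functions are exactly the permutations, which is a welcome (though not essentially different) elaboration.
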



\begin{proof}
Consider the homogeneous polynomial 
\[
\Phi_A(X_1 X_2 \cdots X_N) = 
\left( \sum_{j_1=1}^N A_{1,j_1} X_{j_1} \right) 
\left( \sum_{j_2=1}^N A_{2,j_2} X_{j_2} \right) \cdots 
\left( \sum_{j_N=1}^N A_{1,j_N} X_{j_N} \right).
\]
By inspection we see that the coefficient of $X_1 X_2 \cdots X_N$ in $\Phi_A(X_1 X_2 \cdots X_N)$ is equal to $\per(A)$.
\end{proof}

%
%

\begin{lemma}
For $B,C\in M_{N\times N}(\C)$, we have
\[
\Phi_{B C} = \Phi_{C} \circ \Phi_{B}.
\]
\end{lemma}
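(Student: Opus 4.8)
The plan is to exploit the universal property of the polynomial ring $R=\C[X_1,\ldots,X_N]$: a $\C$-algebra endomorphism of $R$ is completely determined by its action on the generators $X_1,\ldots,X_N$. Both $\Phi_{BC}$ and $\Phi_C\circ\Phi_B$ are such endomorphisms — the former directly by Definition~\ref{def:phi}, and the latter because a composition of ring endomorphisms is again a ring endomorphism (additivity, multiplicativity, and fixing of the scalars are each preserved under composition). Hence it suffices to verify that the two maps agree on each variable $X_i$, and then invoke uniqueness to conclude equality on all of $R$.

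The computational step is to evaluate the right-hand side on a variable. Applying $\Phi_B$ first gives $\Phi_B(X_i)=\sum_j B_{ij}X_j$; applying $\Phi_C$ and using that it is additive and fixes the scalar coefficients $B_{ij}$ yields
\[
(\Phi_C\circ\Phi_B)(X_i) = \sum_{j=1}^N B_{ij}\,\Phi_C(X_j) = \sum_{j=1}^N B_{ij}\sum_{k=1}^N C_{jk}X_k = \sum_{k=1}^N\Big(\sum_{j=1}^N B_{ij}C_{jk}\Big)X_k .
\]
The inner sum is precisely the matrix entry $(BC)_{ik}$, so the right-hand side equals $\sum_k (BC)_{ik}X_k$, which by Definition~\ref{def:phi} is exactly $\Phi_{BC}(X_i)$. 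Thus the two endomorphisms coincide on the generating set, and by the uniqueness statement above they agree on every monomial and hence, by additivity, on every polynomial.

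I do not expect any serious obstacle here; the content of the lemma is essentially that associativity of matrix multiplication is reflected in composition of the induced maps. The one point requiring explicit care is the step in which the coefficients $B_{ij}$ are pulled out through $\Phi_C$: this uses that $\Phi_C$ is not merely additive and multiplicative but $\C$-linear, i.e.\ that it fixes the field of constants. This is implicit in Definition~\ref{def:phi}, since extending multiplicatively forces a monomial $c\,X^n$ with $c\in\C$ to map to $c\,\Phi_C(X)^n$; I would state this normalization explicitly before carrying out the computation, so that the scalars can legitimately be factored through.
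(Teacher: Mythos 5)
Your proof is correct and takes essentially the same approach as the paper: both arguments verify the identity on the generators $X_i$ via the same matrix-entry computation (the paper runs it from $\Phi_{BC}(X_i)$ to $(\Phi_C\circ\Phi_B)(X_i)$, you run it in the reverse direction), relying on the $\C$-linearity of $\Phi_C$ and on the fact that an endomorphism of $R$ is determined by its values on the variables. Your explicit remarks on the universal property and on the scalar-fixing normalization merely spell out what the paper leaves implicit.
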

\begin{proof}
Assume $A=BC$ so that $A_{ij}=\sum_{k=1}^N B_{ik} C_{kj}$ for $i,j=1,\ldots,N$. We have
\begin{eqnarray*}
\Phi_A(X_i) 
& = & 
\sum_{j=1}^N A_{ij} X_j =
\sum_{j,k=1}^N B_{ik} C_{kj} X_j \\
& = & 
\sum_{k=1}^N B_{ik} \sum_{j=1}^n C_{kj} X_j =
\sum_{k=1}^N B_{ik} \Phi_C(X_k) \\
& = & 
\Phi_C\Big ( \sum_{k=1}^N B_{ik} X_k \Big) =
\Phi_C \big (\Phi_B(X_i) \big) \\
& = &
(\Phi_{C} \circ \Phi_{B})(X_i).
\end{eqnarray*}
We carried out this calculation explicitly to emphasize that the map $A\mapsto \Phi_A$ reverses the order, that is, $\Phi_{BC} = \Phi_C \circ \Phi_B$ and not $\Phi_B \circ \Phi_A$, as one might expect.
\end{proof}

\subsection{Factorization of $A$ and the corresponding quantum circuit}

We now show that the permanent can also be expressed as the transition amplitude of of a circuit acting on a Hilbert space with a tensor product state.  This allows us to use the method of matrix product states to compute the permanent.  To this end, we associate the Hilbert space 
\[
\cH = \mathrm{span}\{ |n\> \mid n \in \cN^{(N)} \}
\]
 to the set $R^{(N)}$ of homogeneous polynomials of total degree $N$.  
For $n\in\cN^{(N)}$, the monomial $X^n$ corresponds to the basis state $|n\>$.  
By construction, the map $\Phi_A$ is a linear map on this Hilbert space.  We use $|\hat{1}\>$ to denote the state $|1\>\otimes|1\>\otimes\cdots\otimes |1\>$, which corresponds to the monomial $X_1 X_2 \cdots X_N$.

It is important for our algorithm that the Hilbert space $\cH$ can be embedded into the tensor product Hilbert space 
\begin{equation}\label{eq:tensorProductH}
\big(\C^{N+1}\big)^{\otimes N} = \mathrm{span}\Big\{ |n_1\> \otimes \cdots \otimes |n_N\> \mid 0 \le n_i \le N \mbox{ for $i=1,\ldots,N$}\Big\}. 
\end{equation}
Since $\big(\C^{N+1}\big)^{\otimes N}$ has a tensor product structure, we can naturally consider circuits with the obvious notion of the two-qudit and single qudit gates.  We restrict the two qudit gates to act on adjacent qudits only.  

\begin{lemma}\label{lem:circuit}
Let $F\in M_{N\times N}(\C)$.  If $F$ has the form
\[
\left(
\begin{array}{ccccccc}
1 & \\
  & \ddots \\
  &  & 1 \\
  &  & & a \\
  &  & & & 1 \\
  & & & & & \ddots \\
  & & & & & & 1
\end{array}
\right)
\]
where the $1\times 1$ block $(a)$ acts on the subspace spanned by $|n_k\>$ of $\C^{(N+1)}$, then $\Phi_F$ is a single qudit gate
\[
\underbrace{\ii \otimes \cdots \otimes \ii}_{k-1} \otimes G'_k \otimes \underbrace{\ii \otimes \cdots \otimes \ii}_{N-k},
\] 
where $G'_k$ acts as diagonal matrix on $\C^{N+1}$.

If $F$ has the form
\[
\left(
\begin{array}{cccccccc}
1 &         \\
  &  \ddots \\
  &          & 1 \\
  &          &    & a & b \\
  &          &    & c & d \\
  &          &    &    &   & 1 \\
  &          &    &    &   &   & \ddots \\
  &          &    &    &   &   &           & 1
\end{array}
\right),
\]
where the 2 by 2 block $\left( \begin{array}{cc} a & b \\ c & d \end{array} \right)$ acts on the two qudit  subspace spanned by $|n_k\> \otimes |n_{k+1}\>$ in  $\big(\C^{N+1})^{\otimes 2}$, then $\Phi_F$ is a two qudit gate
\[
\underbrace{\ii \otimes \cdots \otimes \ii}_{k-1} \otimes G_{k,k+1} \otimes \underbrace{\ii \otimes \cdots \otimes \ii}_{N-k-1}.
\]
where $G_{k,k+1}$ acts on the tensor components $k$ and $(k+1)$ of $(\C^{N+1})^{\otimes N}$ 
and is block-diagonal with respect to the subspaces 
\[
\mathcal{S}(m) = \mathrm{span}\{ |n'\> \otimes |n''\> \mid n' + n'' = m\} \subset \C^{N+1}\otimes \C^{N+1}.
\]
\end{lemma}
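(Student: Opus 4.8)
The plan is to compute the action of $\Phi_F$ directly on an arbitrary basis monomial $X^n = X_1^{n_1} \cdots X_N^{n_N}$, exploiting two facts: that $\Phi_F$ is multiplicative (Definition~\ref{def:phi}), and that $F$ coincides with the identity outside its single block. Under the embedding of $\cH$ into $\big(\C^{N+1}\big)^{\otimes N}$ in which $X^n$ corresponds to $|n_1\rangle \otimes \cdots \otimes |n_N\rangle$, I expect the action on the tensor factors outside the block to be trivial, so that the entire content of the lemma reduces to understanding the block itself.

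For the single-block case, $\Phi_F(X_i) = X_i$ for $i \ne k$ and $\Phi_F(X_k) = a X_k$. By multiplicativity,
\[
\Phi_F(X^n) = \prod_{i \ne k} X_i^{n_i} \cdot (a X_k)^{n_k} = a^{n_k} X^n,
\]
so $\Phi_F$ multiplies $|n\rangle$ by a scalar depending only on the $k$th entry. In tensor language this is precisely the identity on every factor except the $k$th, on which it acts as $G'_k = \mathrm{diag}(a^0, a^1, \ldots, a^N)$, a diagonal matrix on $\C^{N+1}$, as claimed.

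For the two-block case I would again apply multiplicativity, now using $\Phi_F(X_i) = X_i$ for $i \notin \{k, k+1\}$ together with $\Phi_F(X_k) = a X_k + b X_{k+1}$ and $\Phi_F(X_{k+1}) = c X_k + d X_{k+1}$, giving
\[
\Phi_F(X^n) = \prod_{i \notin \{k,k+1\}} X_i^{n_i} \cdot (a X_k + b X_{k+1})^{n_k}\,(c X_k + d X_{k+1})^{n_{k+1}}.
\]
The crucial observation is that every monomial in the expansion of the final product has the form $X_k^{p} X_{k+1}^{q}$ with $p + q = n_k + n_{k+1}$, since the two factors are homogeneous of degrees $n_k$ and $n_{k+1}$ in the pair $(X_k, X_{k+1})$. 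Hence $\Phi_F$ leaves all tensor factors but the $k$th and $(k+1)$st untouched and maps $|n_k\rangle \otimes |n_{k+1}\rangle$ into $\mathrm{span}\{ |p\rangle \otimes |q\rangle \mid p + q = n_k + n_{k+1}\}$. Writing $m = n_k + n_{k+1}$, this says exactly that the induced two-qudit gate $G_{k,k+1}$ preserves each subspace $\mathcal{S}(m)$, i.e.\ is block-diagonal with respect to the decomposition by total occupation number.

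The computations are elementary once the correspondence is fixed, so I do not anticipate a serious obstacle. The one point requiring care is the bookkeeping of the tensor-product embedding: one must verify that the scalar (single-block) and the linear recombination (two-block) genuinely extend to well-defined gates $G'_k$ and $G_{k,k+1}$ on all of $\C^{N+1}$ and $\big(\C^{N+1}\big)^{\otimes 2}$, not merely on the degree-$N$ subspace $\cH$, and that it is conservation of the total power $n_k + n_{k+1}$ that yields the $\mathcal{S}(m)$ block structure. This degree conservation is the conceptual core, reflecting that $\Phi_F$ preserves total polynomial degree.
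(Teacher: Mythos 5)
Your proposal is correct and follows essentially the same route as the paper's own proof: apply $\Phi_F$ to a basis monomial using multiplicativity, note that factors outside the block are untouched, read off the diagonal gate $G'_k = \sum_{n_k} a^{n_k}\proj{n_k}$ in the $1\times 1$ case, and deduce block-diagonality of $G_{k,k+1}$ from conservation of the total degree $n_k+n_{k+1}$ in the $2\times 2$ case. The only cosmetic difference is that the paper writes out the binomial expansion of $(aX_k+bX_{k+1})^{n_k}(cX_k+dX_{k+1})^{n_{k+1}}$ to give an explicit matrix-element formula for $G_{k,k+1}$, whereas you argue directly from homogeneity, which suffices for what the lemma asserts.
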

\begin{proof}
Let us first assume, that $F$ only has a single $1\times 1$ block given by $a$ in the $k$'th entry. According to definition \ref{def:phi}, the map $\Phi_F$ acts as $\Phi_F(X_1^{n_1}X_2^{n_2} \ldots X_k^{n_k} \ldots X_N^{n_N}) =  a^{n_k} X_1^{n_1}X_2^{n_2} \ldots X_k^{n_k} \ldots X_N^{n_N}$ on the monomials. Due to the linearity of $\Phi_F$ this can be extended to the full space $(\C^{N+1})^{\otimes N}$. 
The action of the map $\Phi_F$ can be described by  $\ii^{\otimes k-1} \otimes G'_k \otimes \ii^{ \otimes n-k}$, with 
\be
	G'_k  = \sum_{n_k=0}^N a^{n_k} \proj{n_k}. 
\ee
This gate only acts on a single site. 

Furthermore, let us consider the action of a $2\times 2$ block $\left( \begin{array}{cc} a & b\\ c & d \end{array} \right)$, on the variables $X_k$ and $X_{k+1}$. If we assume that the $F$ acts trivially on the remaining variables, we have that
\bq
	\Phi_F(X_1^{n_1} \ldots X_k^{n_k} X_{k+1}^{n_{k+1}}\ldots X_N^{n_N}) =  \sum_{r_k = 0}^{n_k} \sum_{r_{k+1} = 0}^{n_{k+1}} 
	{n_k \choose r_k }{n_{k+1} \choose r_{k+1} } a^{r_k}b^{n_k - r_k}c^{n_{k+1} - r_{k+1}}d^{r_{k+1}} \times \nonumber \\
	X_1^{n_1} \ldots X_k^{r_k + n_{k+1} - r_{k+1} } X_{k+1}^{n_k - r_k + r_{k+1}}\ldots X_N^{n_N}.
\eq
Again, by making use of the linearity of $\Phi_F$, we can extend the action of this map to the full Hilbert space. This map  
acts on the state as the tensor product $\ii^{\otimes k-1} \otimes G_{k,k+1} \otimes \ii^{ \otimes n- k -1}$. For completeness, we state 
the gate $G_{k,k+1}$ explicitly,  

\bq\label{fugly}
G_{k,k+1} =  \sum_{n = 0}^{ 2 N} \sum_{m = 0}^{n} \sum_{r = 0}^{m} \sum_{s = 0}^{n-m} 
{m \choose r }{{n - m} \choose s } a^{r}b^{m-r}c^{n-m -s}d^{s} \times \nonumber \\  \ket{r + n - m -s}\ket{m - r + s}\bra{m}\bra{n-m}.
\eq

It follows that this gate has to preserve the total degree of the variables $X_k$ and $X_{k+1}$,
since by definition $\Phi_F(X_k^{n_k}X_{k+1}^{n_{k+1}}) =  \Phi_F(X_k)^{n_k}\Phi_F(X_{k+1})^{n_{k+1}}$. This leads to 
a block diagonal structure of the gate $G_{k,k+1}$. Each block is labeled by the total degree, or particle number, 
$n_{k} + n_{k+1}$. Note, that this is why we only  consider the sum of $n = 0,\ldots,2N$ in the representation in equation (\ref{fugly}). 
\end{proof}

\begin{cor}\label{cor:quantumCircuit}
A decomposition of $A$ of the form $A=F_1 F_2 \cdots F_L$, where each factor $F_i$ is block-diagonal with blocks of size 1 by 1 or 2 by 2, leads to a decomposition of $\Phi_A=\Phi_{F_L} \cdots \Phi_{F_2} \Phi_{F_1}$ as a quantum circuit of depth $L$.  The blocks of each factor $F_i$ correspond to gates that can be executed in parallel, that is, each $F_i$ corresponds to one layer $\Phi_{F_i}$.
\end{cor}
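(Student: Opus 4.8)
The plan is to combine the two facts already in hand: the order-reversing composition rule $\Phi_{BC} = \Phi_C \circ \Phi_B$ proved above, and Lemma~\ref{lem:circuit}, which identifies $\Phi_F$ for a factor with a \emph{single} nontrivial block as a local gate acting on one or two adjacent tensor components.

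First I would establish the depth. Applying the composition rule inductively to $A = F_1 F_2 \cdots F_L$ gives
\[
\Phi_A = \Phi_{F_L} \circ \Phi_{F_{L-1}} \circ \cdots \circ \Phi_{F_1},
\]
so each factor $F_i$ contributes exactly one map $\Phi_{F_i}$ and the circuit has depth $L$. The only care required here is the order of composition, which the rule dictates.

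The substantive step is to show that each individual $\Phi_{F_i}$ is itself a single layer of gates acting in parallel, since Lemma~\ref{lem:circuit} as stated handles only one nontrivial block at a time. I would write the block-diagonal matrix $F_i$ as a product $F_i = G_1 G_2 \cdots G_m$, where $G_j$ carries the $j$th diagonal block of $F_i$ and equals the identity on every other coordinate. Because the blocks occupy disjoint, consecutive coordinate ranges, the $G_j$ pairwise commute and their product reproduces $F_i$; moreover each $G_j$ now has exactly the single-block form required by Lemma~\ref{lem:circuit}. Applying the composition rule once more yields $\Phi_{F_i} = \Phi_{G_m} \circ \cdots \circ \Phi_{G_1}$, and the lemma identifies each $\Phi_{G_j}$ as a single-qudit gate $G'_k$ or a two-qudit gate $G_{k,k+1}$ supported precisely on the tensor factors corresponding to the $j$th block.

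The key observation is then that these gates have pairwise disjoint supports: a $1\times 1$ block at coordinate $k$ acts on factor $k$, a $2\times 2$ block on coordinates $k,k+1$ acts on the adjacent factors $k$ and $k+1$, and disjointness of the coordinate ranges of the blocks of $F_i$ forces disjointness of the supports --- in particular the two-qudit gates never overlap, since the $2\times 2$ blocks partition their portion of the index set into adjacent pairs. Operators on disjoint tensor factors commute, and more strongly their composition collapses to a tensor product, so $\Phi_{F_i} = g_1 \otimes g_2 \otimes \cdots \otimes g_m$, where each block gate $g_j$ is one of the single- or two-qudit gates of Lemma~\ref{lem:circuit} placed on its own factor(s); since the blocks cover all coordinates, this single layer fills all $N$ tensor components. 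Substituting this layered form into the depth-$L$ composition above then gives the claimed circuit. I expect the only obstacle to be bookkeeping rather than anything deep: verifying cleanly that the composition of disjointly supported operators equals their tensor product, and that the block-diagonal structure automatically enforces the adjacency constraint on the two-qudit gates.
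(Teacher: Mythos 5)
Your proof is correct and follows exactly the route the paper intends: the paper states this as an immediate corollary of the order-reversing composition rule $\Phi_{BC}=\Phi_C\circ\Phi_B$ together with Lemma~\ref{lem:circuit}, giving no explicit proof of its own. Your only addition is to spell out the routine detail the paper leaves implicit --- factoring each multi-block $F_i$ into commuting single-block matrices with disjoint supports so that Lemma~\ref{lem:circuit} applies blockwise and the composition collapses to a tensor product --- which is a faithful filling-in, not a different argument.
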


\subsection{Decomposition of banded matrices with banded inverses}
 
We now show that the class of block factorizable matrices contains banded matrices with banded inverses, which were studied in the literature \cite{Strang11_LPU,Strang11}.
 
Ideally, we would like to have sufficient and necessary conditions characterizing matrices block factorizable matrices $A$, that is, those having decompositions of the form
\[
A = F_1 F_2 \cdots F_L,
\]
where each $F_i$ is block-diagonal with 2 by 2 or 1 by 1 blocks.

It is easy to see that $A$ must be banded with bandwidth $L$ to possess such decomposition.  Furthermore, if $A$ is invertible then $A^{-1}$ must also be banded with bandwidth $L$ since
\[
A^{-1} = F_L^{-1} \cdots F_2^{-1} F_1^{-1}.
\]
The results on banded matrices with banded matrices due to Strang lead to the following theorem.

\begin{theorem}
Suppose $A$ and $A^{-1}$ have bandwidth $w$.  Then $A$ is a product $F_1\cdots F_L$ of block-diagonal matrices $F_i$ with 2 by 2 or 1 by 1 blocks and $L\le 2 w(2w+1)$.
\end{theorem}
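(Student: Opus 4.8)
The plan is to reduce the theorem to the factorization results of Strang \cite{Strang11_LPU,Strang11} and to supply the combinatorial accounting that converts those results into factors with blocks of size at most two. The strategy has two stages. First I would factor $A$ into a bounded number of matrices that are block-diagonal with blocks of size at most $2w$, using the hypothesis that \emph{both} $A$ and $A^{-1}$ are banded. Second I would decompose each such fixed-size block into a product of nearest-neighbor $2\times 2$ and $1\times 1$ block-diagonal matrices and count the resulting layers; since the blocks occurring in one blocking are disjoint, their decompositions run in parallel and contribute additively to the depth. The $1\times 1$ blocks absorb odd leftover sizes and the diagonal scalings.

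The heart of the argument, and the step where the hypothesis is genuinely used, is the first stage. Bandedness of $A$ alone only forces the number of factors to be at least $w$ and does not by itself bound the number of $2\times 2$-block factors; it is the banded inverse that makes the off-diagonal blocks of $A$ have small rank (a semiseparable-type structure), and this is exactly what allows one to peel off a block-diagonal factor $G$ with blocks of size $2w$ so that the ``quotient'' $A G^{-1}$ is again banded with banded inverse but structurally simpler---closer to block-diagonal, or with its band shifted. Iterating this peeling a bounded number of times exhausts $A$. I would invoke Strang's theorem to carry out precisely this peeling; reproducing it amounts to tracking the low-rank off-diagonal blocks through each elimination step, which is the main obstacle and the technical core I would lean on the cited work for.

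For the second stage, an arbitrary invertible $k\times k$ matrix with $k\le 2w$ can be written as a product of ``bandwidth-one'' factors by adjacent-row Gaussian elimination (equivalently a Givens-type cascade): each such factor has nonzero entries only on the diagonal and one adjacent diagonal. A bandwidth-one matrix is in turn the product of two block-diagonal matrices with $2\times 2$ blocks---one grouping the pairs $\{1,2\},\{3,4\},\dots$ and the other the pairs $\{2,3\},\{4,5\},\dots$---because the row operations on disjoint pairs commute. Assembling the two stages and carrying out the bookkeeping over the $2w$-sized blocks and the bounded number of block-diagonal factors produced by Strang yields the stated bound $L\le 2w(2w+1)$. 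Matching this constant exactly requires following Strang's careful accounting; the conceptual content is captured by the two stages above, with the banded-inverse hypothesis doing all the work in the first.
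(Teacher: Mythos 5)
Your two-stage skeleton coincides with the paper's (which likewise defers the heavy lifting to Strang), but both stages as you execute them have gaps, and together they prevent you from obtaining the stated bound. In the first stage, ``peel off block-diagonal factors a bounded number of times'' is strictly weaker than what is needed: the constant in $L\le 2w(2w+1)$ arises because Strang's result gives \emph{exactly two} factors, $A=BC$, where $B$ is block diagonal with blocks of sizes $w,2w,2w,\ldots$ and $C$ is block diagonal with blocks of sizes $2w,2w,\ldots,2w$; it is the shift between the two block partitions that lets the product $BC$ fail to be block diagonal. If your peeling produced $m$ factors with blocks of size at most $2w$, the second stage would only give $L\le m\,w(2w+1)$, so without establishing $m=2$ (which you do not) no definite constant follows.

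In the second stage, your key claim --- that a bandwidth-one matrix is the product of two block-diagonal matrices with $2\times 2$ blocks, one on the pairs $\{1,2\},\{3,4\},\ldots$ and the other on $\{2,3\},\{4,5\},\ldots$ --- is false. Already for $3\times 3$ matrices: if
\[
B_1=\left(\begin{array}{ccc} a & b & 0\\ c & d & 0\\ 0 & 0 & e\end{array}\right),\qquad
B_2=\left(\begin{array}{ccc} f & 0 & 0\\ 0 & g & h\\ 0 & i & j\end{array}\right),
\]
then $(B_1B_2)_{13}=bh$, while $(B_1B_2)_{12}=bg$ and $(B_1B_2)_{23}=dh$; tridiagonality of the product forces $bh=0$, so one of its two superdiagonal entries must vanish. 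Hence no tridiagonal matrix with both superdiagonal entries nonzero has this form, and the reverse order $B_2B_1$ fails the same way on the subdiagonal since $(B_2B_1)_{31}=ic$. The obstruction is that elementary operations on overlapping adjacent pairs do not commute, so an elimination cascade cannot be regrouped into just two disjoint layers; you also never count how many bandwidth-one factors your adjacent-row elimination needs, which would matter even if the regrouping were valid. The paper sidesteps all of this by citing Strang's Lemma 2.2: every $n\times n$ matrix is a product of $n(n+1)/2$ matrices each containing a single $2\times 2$ block (all other blocks $1\times 1$). Each such factor is already admissible, and applying the lemma in parallel to the disjoint blocks of $B$ and of $C$ gives $w(2w+1)$ factors for each, hence $L\le 2w(2w+1)$. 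Your concluding remark that matching the constant ``requires following Strang's careful accounting'' is accurate, but that accounting is precisely the content of the theorem, so as it stands the proposal does not prove it.
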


The proof of consists of two steps:
\begin{itemize}
\item Factor $A$ into $BC$ with diagonal blocks of size $w, 2w, 2w,\ldots$ for $B$ and $2w,2w,\ldots,2w$ for $C$. The shift between the two sets of blocks means that $A=BC$ need not be block diagonal.
\item Break $B$ and $C$ separately into factors $F$ with blocks of size $2$ or $1$.  
\end{itemize}
The method for obtaining the decomposition $A=BC$ in the first step is described in \cite[Section 2]{Strang11_LPU}.  The special case of orthogonal matrices is treated in \cite[Theorem 2.1]{Strang11}.  

For the second step, we use the result that any matrix of size $n$ by $n$ can be decomposed into a product of $n(n+1)/2$ matrices such that each of these matrices contains only one 2 by 2 block.  This is described in \cite[Lemma 2.2]{Strang11}.  Applying this result independently to each block of $B$ and $C$ we conclude that $B$ and $C$ can be decomposed into a product of $w(2w+1)$ block diagonal matrices with 2 by 2 and 1 by 1 blocks.  We see that $L\le 2w(2w+1)$.

%
%

\section{Evaluation of circuits based on matrix product states}

A naive evaluation of the circuit for the permanent would immediately yield a cost which scales at least as unfavorably as 
$O((N+1)^N)$, when keeping track of all the components of the vector in the tensor product Hilbert space $\cH$ in eq.~(\ref{eq:tensorProductH}). However, for a circuit of small depth such as the one in Corollary~\ref{cor:quantumCircuit}, a significantly more compact description of the resulting state can be achieved.  A widely used description is the matrix product state (MPS) \cite{Perez,Vidal03,Verstraete08}. 
\begin{definition}
A matrix product state (MPS) on some tensor product space $\ket{\psi} \in (\C^{d+1})^{\otimes N}$ is a state written in the form
\be\label{def:MPS}
	\ket{\psi} = \sum_{n_1,\ldots,n_N = 0}^d B^{[1]}_{n_1}B^{[2]}_{n_2} \ldots B^{[N]}_{n_{N}} \ket{n_1\ldots n_N}.
\ee 
For $k=1,\ldots,N$, the matrices $B^{[k]}_{n_k} \in M_{D_{k-1} \times D_{k}}(\C)$ are rectangular with the convention $D_0=D_{N+1}=1$.  The first matrix $B^{[1]}_{n_1} \in M_{1\times D_1}(\C)$  is a row vector and the last matrix $B^{[N]}_{n_{N}} \in M_{D_{N-1}\times 1}(\C)$ is a column vector.  The physical index $n_k$ takes values in
$\{0,1,\ldots,d\}$. The size of the MPS tensor $B^{[k]}_{n_k} $ is described by the triple $(D_{k-1},D_k,d)$.
\end{definition}
We associate a matrix $B^{[k]}_{n_k}$ to each local basis element $\ket{n_k}$.  The coefficient for each basis element $\ket{n_1\ldots n_N}$ is obtained as the product of the corresponding matrices.  
It is often very convenient to visualize the MPS in terms of the pictorial representation given in Fig.~\ref{fig:MPS}.  In this representation, each closed line corresponds to an index that has to be contracted.  An open line represents a local basis element.  
\begin{figure}[h]
\begin{center}
\resizebox{0.815\linewidth}{!}
{\includegraphics{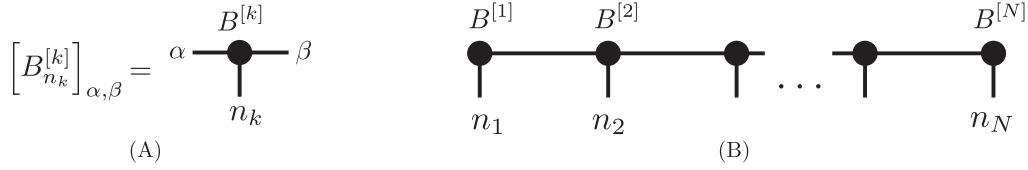}}
\caption{\label{fig:MPS} (A) The figure shows a graphical representation of the matrix product state tensor as defined in (\ref{def:MPS}). Each black dot corresponds to a matrix $B^{[k]}$ with elements $\left[B^{[k]}_{n_k}\right]_{\alpha, \beta}$. The three indices $(\alpha,\beta,n_k)$ are represented in terms of three black lines. (B) The coefficient for each basis state is obtained by contracting the virtual 
indices. This contraction of the $\alpha$ and $\beta$ indices is depicted as the closed center line. The full picture corresponds to
the matrix product $B^{[1]}_{n_1}B^{[2]}_{n_2} \ldots B^{[N]}_{n_N}$. The remaining indeces $\{n_k\}$ are left open.
This corresponds to the indeces of the tensor component which is supported on the local Hilbert space $\C^{d+1}$.}
\end{center}
\end{figure}

A subsequent application of the singular value decomposition between all bipartitions shows that any state in the Hilbert space $ (\C^{d+1})^{\otimes N}$ posesses such a matrix product state representation \cite{Vidal03}.  For the sake of completeness, we repeat the proof here.
\begin{lemma}
Any state $\ket{\psi} = \sum_{n_1,\ldots, n_N=0}^{d} c_{n_1\ldots n_N} \ket{n_1\ldots n_N} \in (\C^{d+1})^{\otimes N}$ possesses a matrix product representation of the form (\ref{def:MPS}).  The maximum matrix dimension $D_{max} = \max_k D_{k}$ is always bounded from above by  
$D_{max} \leq (d+1)^{\lfloor N/2 \rfloor}$.
\end{lemma}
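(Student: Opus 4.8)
The plan is to construct the representation explicitly by a left-to-right sweep of singular value decompositions (SVDs), peeling off one physical index at a time, and then to read off the dimension bound from the ranks of the matrices being decomposed. First I would regard the coefficient array $c_{n_1\ldots n_N}$ as a single vector and reshape it into a matrix $\Psi^{(1)}$ whose rows are labeled by $n_1\in\{0,\ldots,d\}$ and whose columns are labeled by the collective index $(n_2,\ldots,n_N)$, so that $\Psi^{(1)}$ has $d+1$ rows and $(d+1)^{N-1}$ columns. Applying the SVD gives $\Psi^{(1)}=U^{(1)} S^{(1)} (V^{(1)})^\dagger$, where the number of nonzero singular values is $D_1\le d+1$. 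I would define the row vectors $B^{[1]}_{n_1}$ from the rows of $U^{(1)}$ (retaining only the $D_1$ columns with nonzero singular values) and carry the remaining factor $S^{(1)}(V^{(1)})^\dagger$, of shape $D_1\times (d+1)^{N-1}$, forward.

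Next I would iterate. At step $k$ the carried-forward factor has shape $D_{k-1}\times (d+1)^{N-k+1}$; I reshape it so that the combined index $(\alpha_{k-1},n_k)$ of size $D_{k-1}(d+1)$ labels the rows and $(n_{k+1},\ldots,n_N)$ labels the columns, apply the SVD again to obtain $B^{[k]}_{n_k}$ from the left isometry, and carry $S^{(k)}(V^{(k)})^\dagger$ forward. Continuing through $k=N$, where the final factor reduces to the column vector $B^{[N]}_{n_N}$, and multiplying the matrices back together reproduces $c_{n_1\ldots n_N}=B^{[1]}_{n_1}\cdots B^{[N]}_{n_N}$ by construction. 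This is precisely the form (\ref{def:MPS}).

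For the dimension bound I would observe that $D_k$ coincides with the rank of the global matricization $\Psi^{(k)}$ obtained by splitting the tensor across the cut between sites $k$ and $k+1$, that is, rows labeled by $(n_1,\ldots,n_k)$ and columns labeled by $(n_{k+1},\ldots,n_N)$. Since the rank of any matrix is bounded by the smaller of its two dimensions, $D_k\le \min\big((d+1)^k,(d+1)^{N-k}\big)=(d+1)^{\min(k,N-k)}$. Maximizing over $k$ and using $\max_k \min(k,N-k)=\lfloor N/2\rfloor$ then yields $D_{max}\le (d+1)^{\lfloor N/2\rfloor}$.

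The main obstacle is essentially bookkeeping: one must verify that the bond dimension produced by the peeling procedure really equals the rank of the global matricization across each cut, rather than merely being bounded by the local reshape dimension $D_{k-1}(d+1)$. This holds because at each step the accumulated left factors $B^{[1]}_{n_1},\ldots,B^{[k]}_{n_k}$ combine into an isometry from the bond space into the physical indices $(n_1,\ldots,n_k)$, so the successive SVDs never inflate the Schmidt rank across a given cut. Making the index reshaping and the absorption of the singular values unambiguous at every step is the part that requires care.
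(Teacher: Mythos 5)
Your proof is correct and takes essentially the same approach as the paper: a left-to-right sweep of SVDs peeling off one physical index at a time, with the dimension bound $D_k \le (d+1)^{\min(k,N-k)}$ coming from the rank across each cut being at most the dimension of the smaller side. The only (immaterial) difference is a convention --- the paper absorbs the singular values $\sigma^{[k]}$ into each tensor $B^{[k]}$, whereas you keep the left isometries as the tensors and carry $S^{(k)}(V^{(k)})^{\dagger}$ forward.
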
  
\begin{proof} 

Note that every state on a bipartite Hilbert space $\ket{\psi} \in {\cal H}_1 \otimes {\cal H}_2$ possesses a so called Schmidt decomposition. That is, we can write 
\be
\ket{\psi} = \sum_{i=0}^{d_1} \sum_{j=0}^{d_2} c_{ij} \ket{i j} \Sp =  \sum_{\alpha=0}^{\min(d_1,d_2)} \sigma_{\alpha}  \ket{\varphi_\alpha} \ket{\psi_\alpha}.
\ee
We made use of the canonical singular value decomposition (SVD) \cite{bhatia} for rectangular matrices. For the components of the 
matrix $c$ we have $c_{i,j} = \sum_{\alpha }U_{i ,\alpha} \, \sigma_\alpha \, \overline{V}_{j,\alpha}$ with $\sigma_\alpha\ge 0$.  Furthermore, the matrices $U = \sum_{a,b} U_{a,b} \ket{a}\bra{b}$ and $V = \sum_{a,b} V_{a,b} \ket{a}\bra{b}$ are unitary.  The so-called Schmidt vectors are given by $\ket{\varphi_\alpha} = \sum_{i} U_{i,\alpha} \ket{i}$ as well as $\ket{\psi_\alpha} = \sum_{j} \overline{V}_{j,\alpha} \ket{j}$.  They form orthonormal bases on the two sides of the bipartition due to the unitarity of $U$ and $V$.

To obtain the MPS representation, we proceed as follows:
\begin{enumerate}
\item
Consider the cut $(n_1),(n_2,\ldots,n_N)$ and perform a SVD of the matrix $c_{(n_1), (n_2\ldots n_N)}$ with respect to this partition. The resulting state can be expressed as  
\be
\ket{\psi} = \sum_{\alpha_1} \sigma_{\alpha_1}^{[1]} \ket{\varphi_{\alpha_1}^{[1]}} \ket{\psi_{\alpha_1}^{[2\ldots N]}} = \sum_{n_1, \alpha_1} U^{[1]}_{n_1,\alpha_1} \sigma_{\alpha_1}^{[1]} \ket{n_1} \ket{\psi_{\alpha_1}^{[2\ldots N]}}.
\ee
\item
The resulting Schmidt vector $\ket{\psi_{\alpha_1}^{[2\ldots N]}}$ (with support on the space $[2\ldots N]$) depends on the indices $\alpha_1,n_2,n_3,\ldots,n_N$. 
Introduce the next cut between $(\alpha_1,n_2),(n_3,\ldots,n_N)$  and perform a further SVD along this partition. This yields the representation
\be
\ket{\psi_{\alpha_1}^{[2\ldots N]}} = \sum_{n_2} \ket{n_2}\ket{\tau_{\alpha_1,n_2}^{[3\ldots N]}} =  \sum_{n_2} U^{[2]}_{(\alpha_1 n_2),\alpha_2 } \sigma_{\alpha_2}^{[2]}  \ket{n_2}\ket{\psi_{\alpha_2}^{[3\ldots N]}}.
\ee 
\item
We now proceed inductively by repeating step 2. After the final local space is reached, the resulting state is of the form
\be
\ket{\psi} = \sum_{\{n_k\}}\sum_{\{\alpha_k\}} U^{[1]}_{n_1,\alpha_1} \sigma^{[1]}_{\alpha_1}U^{[2]}_{(\alpha_1,n_2),\alpha_2} \sigma^{[2]}_{\alpha_2}\ldots U^{[N-1]}_{(\alpha_{N-2},n_{N-1}),\alpha_{N-1}} \sigma^{[N-1]}_{\alpha_{N-1}}\overline{V}^{[N]}_{n_N,\alpha_{N-1}} \ket{n_1 \ldots n_N}.
\ee 
\end{enumerate}
We obtain the desired MPS representation by defining the entries of the MPS tensors by setting 
\[
\left [B^{[k]}_{n_k}\right]_{\alpha_{k-1},\alpha_k} = U^{[k]}_{(\alpha_{k-1},n_k),\alpha_k} \sigma^{[k]}_{\alpha_k} 
\quad\mbox{and}\quad 
[B^{[N]}_{n_N}]_{\alpha_{N-1}} = \overline{V}^{[N]}_{n_N,\alpha_{N-1}}.
\]
Observe that the largest possible maximum dimension $D_k$ occurs for the cut at the center of the chain.  In this case, the dimensions of the respective Hilbert spaces are bounded from above by $(d+1)^{\lfloor N/2 \rfloor}$, which then corresponds to the largest possible rank of the matrix $S = \sum_{\alpha} \sigma_\alpha \proj{\alpha}$.  
\end{proof}

The matrix product state can be described with $\sum_{k = 1}^N (d+1) D_{k}D_{k+1} \leq O(N d D_{max}^2)$ parameters. This can be considerably less than $O((d+1)^N)$  when the matrix-dimensions $D_k$, also referred to as bond dimensions, are small. This scenario corresponds to a quantum state with little correlations among the qudits, which is said to possess a small amount of entanglement in quantum information theory.

Matrix product states and the application of gates to the states can conveniently be depicted as a tensor network (Fig.~\ref{fig:network}). The contraction of the entire tensor network corresponds to the evaluation of expectation values when all the lines are closed.
In this picture, the permanent can be understood as the contraction of the tensor network that is generated by the circuit decomposition of $\Phi_A$.

\begin{figure}[h]
\begin{center}
\resizebox{0.815\linewidth}{!}
{\includegraphics{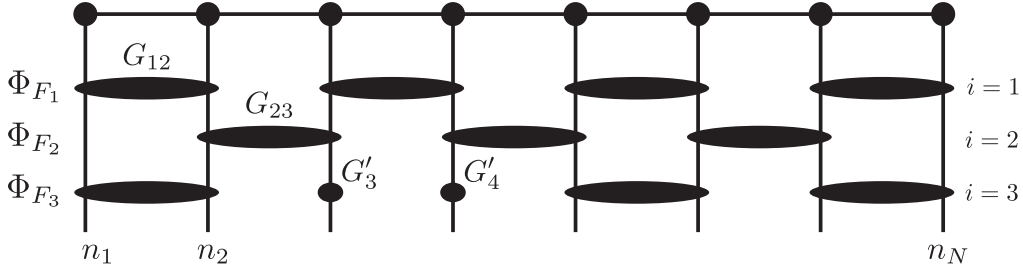}}
\caption{\label{fig:network} A circuit, such as the one considered in Corollary~\ref{cor:quantumCircuit}, can be understood as the contraction of a tensor network where the base corresponds to a matrix product state. We consider single qudit and two qudit operations acting on adjacent qudits. The former are represented by the gates $G'_k$ and the latter by the gates $G_{k,k+1}$. Each of these gates can be understood as local tensor in the tensor network. We have indicated the different layers $i=1,2,3\ldots$ of the network.  In each layer, the individual operations commute and the corresponding commuting gates can be performed in parallel.}
\end{center}
\end{figure}

During the evaluation of the circuit, it is desirable to retain the standard MPS form of the state to be computationally more efficient. A further result in \cite{Vidal03} shows at which cost local updates of a general MPS can be computed. A single qudit gate acting the site $k$ does not increase the Schmidt-rank of the MPS and can be applied to the local tensors $B^{[k]}$ individually. A two qudit gate acting the sites $k$ and $k+1$, however, does mix the neighboring tensors. The larger tensor resulting from this operation has to be decomposed again by a SVD, as depicted in Fig.~\ref{fig:transform}.

%
%
\begin{figure}[h]
\begin{center}
\resizebox{0.90\linewidth}{!}
{\includegraphics{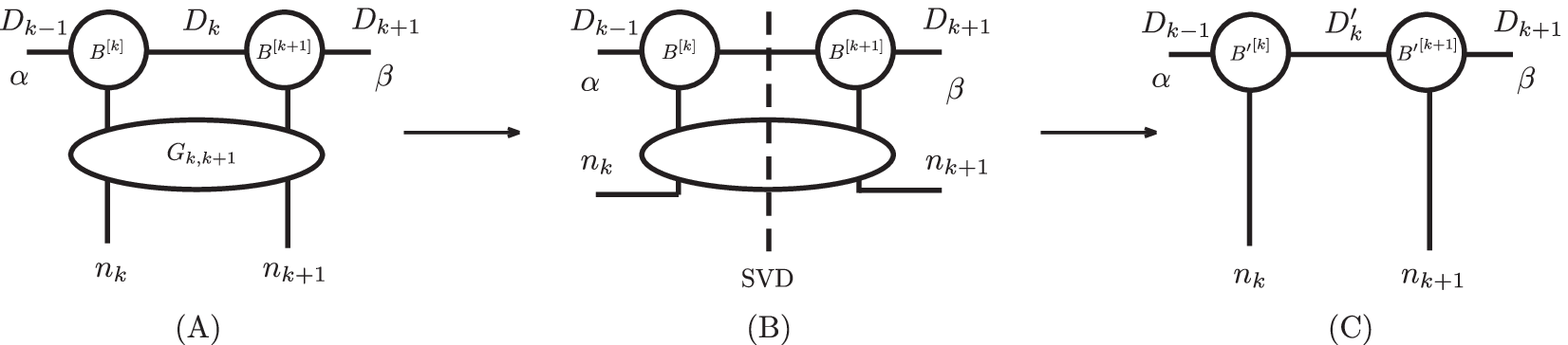}}
\caption{\label{fig:transform}The figure depicts the steps that are performed after an update of the MPS with a gate $G_{k,k+1}$ acting on two adjacent sites. The update proceeds as follows: 
(A) First we contract the MPS tensors $B^{[k]}$ and $B^{[k+1]}$ and apply the gate to the indices $n_k,n_{k+1}$. We are then left with a bigger tensor denoted by $M^{[k,k+1]}$.  This larger tensor has the components $[M^{[k,k+1]}_{n_k,n_{k+1}}]_{\alpha,\beta}$.  
(B) We then regroup the indices $(\alpha,n_k),(n_{k+1},\beta)$ and cast $M^{[k,k+1]}$ into matrix form $[M^{k,k+1}]_{(n_k,\alpha), (n_{k+1},\beta)}$ so that we can perform a SVD.  The resulting SVD yields the components $[M^{k,k+1}]_{(n_k,\alpha), (n_{k+1},\beta)} = \sum_\gamma U_{(n_k,\alpha),\gamma} \sigma_\gamma \overline{V}_{(n_{k+1},\beta),\gamma}$. 
(C) We then define the new MPS tensors ${B'}_{n_k}^{[k]}$ and ${B'}_{n_{k+1}}^{[{k+1}]}$ 
by setting their entries $\left[{B'}_{n_k}^{[k]}\right]_{\alpha,\gamma}=U_{(n_k,\alpha),\gamma} \sigma_\gamma$ and 
$\left[{B'}_{n_{k+1}}^{[k+1]}\right]_{\gamma,\beta}=\overline{V}_{(n_{k+1},\beta),\gamma}$, respectively.}
\end{center}
\end{figure}

%
%

\begin{lemma}\label{lem:update}
For a matrix product state $\ket{\psi}$ in the Hilbert space $(\C^{d+1})^{\otimes N}$, the following 
update operations can be performed so that $\ket{\psi}$ remain in matrix product form.
\begin{enumerate}
	\item An update corresponding to a single local gate $G'_k$ at site $k$ can be applied to the MPS $\ket{\psi}$ at cost
	         $O((d+1)^2 D_kD_{k+1})$.
	\item An update corresponding to a two qudit gate $G_{k,k+1}$ acting two adjacent sites ${k,k+1}$ can be applied at cost
	         $O(\min((d+1)D_{k-1},(d+1)D_{k+1})^3)$.
\end{enumerate} 
\end{lemma}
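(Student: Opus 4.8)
The plan is to handle the two update operations separately, in each case writing down the explicit tensor contractions indicated in Fig.~\ref{fig:transform} and then counting the scalar multiplications they require; preservation of the MPS form will be essentially built into the construction.

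For the single-qudit gate $G'_k$ (part 1), the key observation is that since $G'_k$ acts only on the local index $n_k$, applying it to the state amounts to contracting the $(d+1)\times(d+1)$ matrix $G'_k$ against the physical leg of the tensor $B^{[k]}$ alone, leaving every other tensor and every bond dimension untouched. I would define the updated tensor by
\[
\left[{B'}^{[k]}_{m}\right]_{\alpha,\beta} = \sum_{n_k=0}^{d} \left[G'_k\right]_{m,n_k}\, \left[B^{[k]}_{n_k}\right]_{\alpha,\beta},
\]
and then count: for each of the $O(D_{k-1}D_k)$ pairs of bond indices $(\alpha,\beta)$ this is a multiplication of a $(d+1)\times(d+1)$ matrix against a length-$(d+1)$ vector, at cost $O((d+1)^2)$, so the total is $O\big((d+1)^2 D_{k-1} D_k\big)$, i.e. $(d+1)^2$ times the product of the two bond dimensions flanking site $k$, which is the claimed bound. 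Since the bond dimensions are unchanged, the updated object is trivially still an MPS.

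For the two-qudit gate $G_{k,k+1}$ (part 2), I would carry out the three-step procedure of Fig.~\ref{fig:transform} and cost each step. In step (A) I contract $B^{[k]}$ (of size $D_{k-1}\times D_k$ per physical value) with $B^{[k+1]}$ (of size $D_k\times D_{k+1}$) over the shared bond of size $D_k$ and apply the gate on the combined physical index, producing $M^{[k,k+1]}$ with entries $[M^{[k,k+1]}_{n_k,n_{k+1}}]_{\alpha,\beta}$; the contraction costs $O((d+1)^2 D_{k-1}D_k D_{k+1})$ and applying the (block-diagonal, hence cheaper than the generic $(d+1)^4$) gate costs at most $O((d+1)^4 D_{k-1}D_{k+1})$. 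In step (B) I reshape $M^{[k,k+1]}$ into the $(d+1)D_{k-1}\times (d+1)D_{k+1}$ matrix $[M]_{(n_k,\alpha),(n_{k+1},\beta)}$ and take its SVD $\sum_\gamma U_{(n_k,\alpha),\gamma}\,\sigma_\gamma\,\overline{V}_{(n_{k+1},\beta),\gamma}$. In step (C) I read off the new tensors ${B'}^{[k]}$ and ${B'}^{[k+1]}$ from $U\sigma$ and $\overline{V}$ respectively, at no additional asymptotic cost, with the new intermediate bond dimension equal to the SVD rank, which is bounded by $\min\big((d+1)D_{k-1},(d+1)D_{k+1}\big)$.

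The step requiring the most care is the SVD in (B), which is also the dominant cost. I would first verify that the reshaped matrix is genuinely the coefficient matrix of the post-gate state across the bipartition $\{1,\dots,k\}\mid\{k+1,\dots,N\}$, so that its SVD produces a legitimate MPS factorization exactly as in the Schmidt-decomposition construction used in the proof of the preceding existence lemma; this both guarantees the output is a valid MPS and shows the intermediate bond grows only up to the SVD rank, giving the stated bound $\min\big((d+1)D_{k-1},(d+1)D_{k+1}\big)$ on the new dimension. Second, I would invoke the standard cost $O\big(mn\min(m,n)\big)$ for the SVD of an $m\times n$ matrix with $m=(d+1)D_{k-1}$, $n=(d+1)D_{k+1}$, which dominates the step-(A) contraction and gate-application costs and, for the comparable flanking bond dimensions relevant here, reduces to the claimed $O\big(\min((d+1)D_{k-1},(d+1)D_{k+1})^3\big)$. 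The only genuinely substantive point is thus the bookkeeping of which bipartition the reshaped matrix encodes; the remaining arithmetic counts are routine.
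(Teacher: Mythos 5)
Your proposal is correct and follows essentially the same route as the paper's proof: the single-site gate is absorbed by contracting it with the physical leg of $B^{[k]}$ alone (same cost count, up to the paper's own off-by-one labeling of which two bond dimensions flank site $k$), and the two-site gate is handled by merging the tensors, applying the gate, reshaping to a $(d+1)D_{k-1}\times(d+1)D_{k+1}$ matrix, and taking an SVD whose cost dominates and whose rank bounds the new bond dimension. One minor imprecision worth noting: the reshaped matrix $[M]_{(n_k,\alpha),(n_{k+1},\beta)}$ is \emph{not} literally the coefficient matrix of the post-gate state across the bipartition $\{1,\dots,k\}\mid\{k+1,\dots,N\}$ unless the surrounding tensors are in canonical (orthonormal) form, but this is immaterial for the lemma, since substituting the SVD factors $U\sigma$ and $\overline{V}$ back into the MPS expression directly yields a valid MPS whose new bond dimension equals the rank of that matrix --- which is exactly the argument the paper uses.
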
 

\begin{proof}Let $\ket{\psi} = \sum_{\{n_k\}} B^{[1]}_{n_1}\ldots B^{[N]}_{n_N} \ket{n_1\ldots n_N}$.
\begin{enumerate}
\item   
The single qudit gate $G'_k$ acts as 
\be 
\ii^{\otimes(k-1)} \otimes G'_k \otimes \ii^{N-k} \ket{\psi} = \sum_{\{n_k\},n'_k} B^{[1]}_{n_1}\ldots \left[G'_{k}\right]_{n'_k,n_k}B^{[k]}_{n_k}\ldots B^{[N]}_{n_N} \ket{n_1\ldots n'_k \ldots n_N}.
\ee 
We see that it only modifies the $k$th MPS tensor $B^{[k]}$ and does not alter the other tensors. This update corresponds to a single tensor update that retains the overall structure of the matrix product state. The resulting tensor is given by $\left[{B'}^{[k]}_{n'_k}\right]_{\alpha,\beta} = \sum_{n_k} \left[G'_k\right]_{n'_k,n_k}\left[{B}^{[k]}_{n_k}\right]_{\alpha,\beta}$. The cost for this operation behaves as follows: for each pair $\alpha,\beta$ and each $n'_k$, we have to perform a vector multiplication at the cost $(d+1)$. The resulting overall cost is $O((d+1)^2D_k D_{k+1})$ since there are $D_k D_{k+1}$ pairs $\alpha$ and $\beta$ and $d+1$ indices $n'_k$.

\item
The two qudit gate $G_{k,k+1}$ acts on the state as
\bq 
&&\ii^{\otimes(k-1)} \otimes G_{k,k+1} \otimes \ii^{N-k-1} \ket{\psi} = \nonumber  \\ 
&& \sum_{\{n_k\},n'_k,n'_{k+1}} \!\! B^{[1]}_{n_1}\ldots \left[G_{k}\right]_{(n'_k,n'_{k+1}),(n_k,n_{k+1})}B^{[k]}_{n_k} B^{[k+1]}_{n_{k+1}} \ldots B^{[N]}_{n_N} \ket{n_1\ldots n'_k n'_{k+1} \ldots n_N}.
\eq 
We see that it combines the two, initially independent, tensors 
$B^{[k]}_{n_k} B^{[k+1]}_{n_{k+1}}$ 
to the larger tensor of the form
\be
\left[M^{[k,k+1]}_{n_k,n_{k+1}}\right]_{\alpha,\beta} = \sum_{n'_k,n'_{k+1}} \left[G_{k}\right]_{(n_k,n_{k+1}),(n'_k,n'_{k+1})}\left[B^{[k]}_{n'_k} B^{[k+1]}_{n'_{k+1}}\right]_{\alpha,\beta}.
\ee
This tensor has $(d+1)^2D_{k-1}D_{k+1}$ elements. The contraction of this larger tensor can be computed in time of the order 
$O((d+1)^4 D_{k-1}D_{k+1})$. 

After the contraction, we have to cast the tensors back into the original MPS form as explained in Fig.~\ref{fig:transform}. To achieve this, we regroup the indices into 
$(\alpha,n_k),(n_{k+1},\beta)$  and perform a SVD on the matrix $\left[M^{[k,k+1]}\right]_{(\alpha ,n_k),(n_{k+1},\beta)}$. 
The resulting matrix is $(d+1)D_{k-1} \times (d+1)D_{k+1}$ dimensional.  We are left with the decomposition 

\be
[M^{k,k+1}]_{(n_k,\alpha), (n_{k+1},\beta)} = \sum_{\gamma} U_{(n_k,\alpha),\gamma} \sigma_\gamma \overline{V}_{(n_{k+1},\beta),\gamma}.
\ee

The SVD can be performed in $O(\min((d+1)D_{k-1},(d+1)D_{k+1})^3)$ operations. Since we assume that $\min(D_{k-1},D_{k+1}) \geq d+1$.  The cost of the SVD outweighs the expense of constructing 
the tensor $M^{[k,k+1]}$ and we therefore state this bound as the total runtime of this update. After the SVD has been performed,
we define the new MPS tensors ${B'}_{n_k}^{[k]}$ and ${B'}_{n_{k+1}}^{[k+1]}$ by setting their entries 
\be \left[{B'}_{n_k}^{[k]}\right]_{\alpha,\gamma} = U_{(n_k,\alpha),\gamma} \sigma_\gamma \Sp \mbox{and} \Sp \left[{B'}_{n_{k+1}}^{[k+1]}\right]_{\gamma,\beta} = \overline{V}_{(n_{k+1},\beta),\gamma},
\ee
respectively.
\end{enumerate}
\end{proof}

In summary, the computational cost of contracting the tensor network in terms of time and space usage scales polynomially with the maximal bond dimension $D_{max}$. Therefore, an evaluation of the circuit is feasible when this dimension remains small. This  observation was made in\cite{Vidal03}. Certain quantum circuits can be simulated by classical means, provided that the entanglement is low during the entire computation.  We show in the next section that this is the case for the circuits in Corollary~\ref{cor:quantumCircuit}.

%
%

\section{Presentation of our algorithm and analysis of its time and space complexity}

To evaluate the circuit that corresponds to the permanent, we start the computation with the product state $\ket{\hat{1}} = \ket{11\ldots 11}$. For this state, we have $D_{max}=1$. However, this dimension grows with each application of a new layer of $\Phi(A)$.   In general, as we will see, the bond-dimension, and by that the resulting computational effort, can grow exponentially with the bandwidth of the matrix $A$. The following lemma states by how much each $D_k$ can increase when a single gate from Lemma~\ref{lem:circuit} is applied.

\begin{lemma}\label{lem:growdeg}
Let $\Phi_{F}$ denote a local gate as in Lemma \ref{lem:circuit}. After the application of $\Phi_F$ to the adjacent MPS tensors $B^{[k]}$ and $B^{[k+1]}$, with dimensions $(D_{k-1},D_{k},d)$ and $(D_{k},D_{k+1},d)$, respectively, the shared bond dimension $D_k$ behaves as follows:
\begin{enumerate}
	\item If $\Phi_F$ acts on a single site $k$, then the dimensions $D'_{k}$ remains unchanged. 
	\item If $\Phi_F$ acts on two adjacent sites $k,k+1$, then the new shared bond dimension $D'_k$ between these sites is is bounded by
	       ${D'}_k \le \min\left\{(d+1)D_{k-1},(d+1)D_{k+1}\right\}$.
\end{enumerate}
\end{lemma}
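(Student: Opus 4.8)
The plan is to derive both statements directly from the explicit MPS update rules of Lemma~\ref{lem:update} (illustrated in Fig.~\ref{fig:transform}), reducing each claim to an elementary fact about matrix dimensions and ranks.

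For the single-site case I would appeal to the update formula $\left[{B'}^{[k]}_{n'_k}\right]_{\alpha,\beta} = \sum_{n_k} \left[G'_k\right]_{n'_k,n_k}\left[{B}^{[k]}_{n_k}\right]_{\alpha,\beta}$. The only point to make is that $G'_k$ is contracted solely with the physical index $n_k$ and acts as the identity on both virtual indices $\alpha$ and $\beta$. Since these carry the bond dimensions $D_{k-1}$ and $D_k$, the updated tensor ${B'}^{[k]}$ again has shape $(D_{k-1},D_k,d)$; in particular no SVD is performed and $D'_k=D_k$.

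For the two-site case I would follow the three steps summarized in Fig.~\ref{fig:transform}. Contracting $B^{[k]}$ and $B^{[k+1]}$ and applying $G_{k,k+1}$ yields the combined tensor with entries $[M^{[k,k+1]}_{n_k,n_{k+1}}]_{\alpha,\beta}$, where $\alpha$ ranges over $D_{k-1}$ values and $\beta$ over $D_{k+1}$ values. Regrouping the indices into $(n_k,\alpha)$ and $(n_{k+1},\beta)$ casts $M^{[k,k+1]}$ into a matrix of size $(d+1)D_{k-1}\times(d+1)D_{k+1}$, and the new bond dimension $D'_k$ is by definition the number of nonzero singular values appearing in its SVD, i.e.\ its rank.

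The decisive step is then a one-line bound: the rank of a $p\times q$ matrix never exceeds $\min(p,q)$, so $D'_k\le\min\{(d+1)D_{k-1},(d+1)D_{k+1}\}$, which is exactly the claim. I do not anticipate any genuine obstacle; the entire content of the lemma is the correct bookkeeping of the index ranges in the reshaped matrix, after which the rank inequality closes the argument. The only subtlety worth flagging is that the rank in question is that of the reshaped matrix $[M^{[k,k+1]}]_{(n_k,\alpha),(n_{k+1},\beta)}$, which is well defined before we reinterpret the SVD factors $U$ and $\overline{V}$ as the new MPS tensors.
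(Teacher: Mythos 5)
Your proposal is correct and follows essentially the same route as the paper's own proof: both cases are reduced to the update rules of Lemma~\ref{lem:update}, with the single-site update leaving the tensor shape $(D_{k-1},D_k,d)$ untouched, and the two-site case bounding $D'_k$ by the rank of the reshaped $(d+1)D_{k-1}\times(d+1)D_{k+1}$ matrix $M^{[k,k+1]}$ via its SVD. Your explicit remark that the rank bound applies to the regrouped matrix before reinterpreting $U$ and $\overline{V}$ as new MPS tensors is a slightly cleaner statement of the same bookkeeping the paper performs.
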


\begin{proof} The operations for the local updates are described in the proof of Lemma \ref{lem:update}.
\begin{enumerate}
\item
If the map $\Phi_F$ acts on a single site $k$, then the new MPS tensor is updated 
according to ${B'}^{[k]}_{n'_k} = \sum_{n_k} \left[\Phi_A\right]_{n_k,n_k} {B'}^{[k]}_{n'_k}$. Therefore, the triple $(D_{k-1},D_{k},d)$ remains unchanged and the MPS structure of the state is preserved.

\item
If the map $\Phi_F$ acts on the two adjacent sites $k$ and $k+1$, the tensor has to be updated according to $M^{k,k+1}_{n_k,n_{k+1}} = \sum_{n'_k,n'_{k+1}} \left[\Phi_A\right]_{(n_k,n_{k+1}),(n'_k,n'_{k+1})} B^{[k]}_{n'_k} B^{[k+1]}_{n'_{k+1}}$. This tensor corresponds to the one depicted in Fig.~\ref{fig:transform}.  After regrouping the indices we are left with a matrix of dimension $(d+1)D_{k-1} \times (d+1)D_{k+1}$. Therefore the singular value decomposition of the matrix $M^{k,k+1}$ can have at most the the rank $\min\left\{(d+1)D_{k-1},(d+1)D_{k+1}\right\}$, which gives the desired bound on $D'_k$.
\end{enumerate}
\end{proof}

Recall, that the gates $G_{k,k+1}$ and $G'_k$ are block diagonal, c.f. Lemma \ref{lem:circuit}. Therefore, as we will show now, the physical dimension is at most $d = 2^i$  after the application of $i$ layers. This fact can be exploited to achieve a linear scaling with
respect to $N$ in the time and space requirements. 

\begin{lemma}\label{lem:local-deg}
Assume we apply the first $i$ layers of the circuit $\Phi_A$, i.e. $\Phi_{F_i}\ldots\Phi_{F_1}$, corresponding to the decomposition $A = F_1\ldots F_i \ldots F_L$ (c.f. Fig.~\ref{fig:network}). If the initial state of the circuit is $\ket{\hat{1}} = \ket{11\ldots 1}$, then the local degree, i.e. particle number, $n_k$ is always bounded by $n_k \leq d = 2^i$ at every site $k$.
\end{lemma}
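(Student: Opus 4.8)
The plan is to prove the bound by induction on the number of layers $i$, tracking how the local degree (particle number) at each site can grow under one layer of the circuit. The base case is immediate: the initial state $\ket{\hat{1}} = \ket{11\ldots 1}$ has local degree $n_k = 1 = 2^0$ at every site, which matches the claim for $i=0$.

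For the inductive step, I would fix a layer $\Phi_{F_j}$ and use the explicit gate action recorded in Lemma~\ref{lem:circuit}. Since $F_j$ is block diagonal with disjoint $1\times 1$ and $2\times 2$ blocks, the gates composing $\Phi_{F_j}$ act on pairwise disjoint sets of sites; hence every site $k$ is affected by at most one gate within the layer. There are two cases. A single-qudit gate $G'_k = \sum_{n_k} a^{n_k}\proj{n_k}$ is diagonal in the particle-number basis and therefore leaves the local degree at site $k$ unchanged. A two-qudit gate $G_{k,k+1}$ sends a monomial with local degrees $(n_k, n_{k+1})$ to a superposition of monomials whose local degrees are $(r_k + n_{k+1} - r_{k+1},\, n_k - r_k + r_{k+1})$ with $0 \le r_k \le n_k$ and $0 \le r_{k+1} \le n_{k+1}$; since each of these quantities is at most $n_k + n_{k+1}$, both new local degrees are bounded by $n_k + n_{k+1}$.

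Combining the two cases gives the crucial one-layer estimate: if before the layer every local degree is bounded by some $M$, then afterwards every local degree is bounded by $2M$, because each new degree is either an unchanged old degree ($\le M$) or the sum of the two old degrees from one block ($\le M + M = 2M$). By linearity this passes from individual basis monomials to the whole superposition, so the state after the layer is supported on monomials all of whose local degrees are $\le 2M$. Starting from $M = 1$ and applying this estimate $i$ times yields the bound $n_k \le 2^i$ after $i$ layers, which is exactly the claim.

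The argument is essentially careful bookkeeping, so I do not expect a deep obstacle. The one point that needs care is the two-qudit bound: I expect the main thing to verify is precisely that both output local degrees $r_k + n_{k+1} - r_{k+1}$ and $n_k - r_k + r_{k+1}$ remain $\le n_k + n_{k+1}$ over the full range of the summation indices, which is where the total-degree conservation $n_k + n_{k+1}$ from Lemma~\ref{lem:circuit} is used. It is also worth stating explicitly that the disjointness of the blocks within a single layer is what prevents the degrees from compounding by more than a factor of two per layer.
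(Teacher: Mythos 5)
Your proof is correct and follows essentially the same route as the paper's: both arguments rest on the facts that the gates within one layer act on disjoint blocks, that single-site gates preserve the local degree, and that two-site gates conserve the total degree $n_k+n_{k+1}$ of the pair, so that the maximal local degree at most doubles per layer, giving $2^i$ after $i$ layers from the initial degree $1$. Your version merely packages this as an explicit induction and extracts the doubling bound from the explicit gate formula in Lemma~\ref{lem:circuit}, whereas the paper reads it off from the multiplicative expansion of $\Phi_{F_i}$ on monomials; the content is the same.
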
 

\begin{proof}
The gates of the circuit  $G = \Phi_{F^k_i}$ can be constructed from $A = F_1\ldots F_i \ldots F_L$ as explained in Lemma \ref{lem:circuit}. Recall, that the gates $\Phi_{F^k_i}$ have the property that they preserve the total degree of the  monomials $X_k^n X_{k+1}^m$ and therefore are themselves block diagonal. Here each block corresponds to a particular local particle number. The local degree $d$ at a single site can increase when we have a gate that is supported on more than one site. Let us estimate by how much $d$ has to grow at each step. We consider the action of a set of commuting gates $\Phi_{F_i^k}$ on the basis states described by $X_1^{n_1} X_2^{n_2} \ldots X_N^{n_N}$. We write w.l.o.g.
\bq
&&G_{12}G_{34}\ldots G_{N-1,N}(X_1^{n_1} X_2^{n_2} \ldots X_N^{n_N})\nonumber \\ = && \Phi_{F_i^{12}}(X_1)^{n_1} \Phi_{F_i^{12}}(X_2)^{n_2} \ldots \Phi_{F_i^{N-1,N}}(X_{N-1})^{n_{N-1}}\Phi_{F_i^{N-1,N}}(X_N)^{n_N} \nonumber \\ =&& \left( \sum_r [F_i]_{1r}X_r \right)^{n_1}\left(\sum_r [F_i]_{2r}X_r\right)^{n_2} \ldots \left(\sum_r [F_i]_{Nr}X_r\right)^{n_N}.
\eq
By counting the degree, we see that the highest degree for two adjacent sites, e.g. $X_k$ and $X_{k+1}$, is always given by $n_k + n_{k+1}$. Therefore, we have that the local dimension increases as $d' = 2d$. We initially start in the state $\ket{\hat{1}}$ that corresponds to 
$X_1X_2\ldots X_N$. This state has a degree of $d = 1$ at each site. We proceed to apply layer by layer and therefore have that  
$d = 2^i$ after $i$ layers. 
\end{proof}

We are now ready to present the algorithm. The main procedure is written in pseudocode as stated in Table \ref{alg:Permanent}.
An informal description of the algorithm follows in the subsequent paragraph. 

\begin{table}[!hbtp]  
  \begin{algorithmic}[1]
    \Procedure{\textrm{(factorization of $A = F_1\ldots F_L$)}}{}
      \State prepare MPS $\{B^{[k]}\}_{k=1 \ldots N= \dim(A)}$ in $B^{[k]}_{n_k} = \delta_{1,n_k}$ for  $n_k \in \{0,1,d=2\}$;
      \For{$i=1$ to $L$}
          \For{$k =1$ block in $F_i = \bigoplus_{k=1} F_i^k$}
         	\State construct $G = \Phi_{F_i^k}$; 
		\State apply $G$ to $B^{[k]}B^{[k+1]}$;
		\State compute SVD and recast in standard MPS form;
		\State $k \leftarrow k + \dim(F_i^k)$;
        \EndFor
        		\If{$N \geq 2^{i+1}$}
        		\State increase local MPS dimension $d = 2^{i+1}$;
		\Else
		\State set local MPS dimension d = (N+1);
		\EndIf
		\State $i \leftarrow i+1$;
       \EndFor
   		\State \Return $\per(A) = B^{[1]}_1 B^{[2]}_1 \ldots B^{[N]}_1$;
    \EndProcedure
  \end{algorithmic}
  \caption{The algorithm computes the permanent of a matrix $AÊ\in M_{N\times N}(\C)$ given a factorization $A = F_1 \ldots F_L$ as input.}
\label{alg:Permanent}
\end{table}

We assume that we are given a decomposition of the matrix $A = F_1 \ldots F_L$, where $F_i = \bigoplus_{k=1} F_i^k$.
The blocks $F_i^k$ have size $2 \times 2$ and $1 \times 1$ as in Theorem~\ref{lem:circuit}. Each block corresponds to a local gate 
$G = \Phi_{F_i^k}$ in the circuit. The circuit starts with the initial state $\ket{\hat{1}}$. This state can be represented as a MPS with $D=1$ by choosing the $B^{[k]}_1 = 1$ for all $k = 1\ldots N$. All other components $B^{[k]}_{n_k}$ with $n_k > d=1$ of the MPS tensor are set to 
zero. Every layer corresponds to a single matrix $F_i$. The different blocks in the matrix commute and correspond to gates that can be applied in parallel. We update each MPS tensor $B^{[k]}$ as described in Lemma \ref{lem:update}.  This ensures that the state is still in matrix product form after the application of the new layer. Before we apply the next iteration, we have to increase the local dimension $d$ of the tensors to $d = 2^{\mbox{\#layer}}$. With each iteration the bond dimension $D_{max} $ will increase. Finally after the $L$ layers have been applied, we proceed to compute the scalar product with the state $\ket{\hat{1}}$. This can be done in time $O(N((d+1)D_{max})^3)$ and amounts to computing the matrix product $\per(A) = B^{[1]}_1 B^{[2]}_1 \ldots B^{[N]}_1$. The full estimate of the runtime, as well as the space requirements are stated in the following theorem.

\begin{theorem}\label{mainTheorem}
Assume that for $A \in M_{N\times N}(\C)$ we have $A=F_1 F_2 \cdots F_L$, where each of the factors $F_i$ is block diagonal with 
$2 \times 2$ or $1 \times 1$ blocks.  Given such factorization, the permanent of $A$ can be computed in time $O(N 2^{3L^2})$ and space $O(N 2^{2L^2})$.
\end{theorem}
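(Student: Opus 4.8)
The statement is a running-time and memory analysis of the algorithm in Table~\ref{alg:Permanent}, so my plan is to track the two quantities that govern the cost of the tensor-network contraction: the local (physical) dimension $d+1$ of each qudit and the maximal bond dimension $D_{max}=\max_k D_k$ of the MPS. Correctness itself requires no new work: by Corollary~\ref{cor:quantumCircuit} the algorithm applies the layers $\Phi_{F_1},\ldots,\Phi_{F_L}$ to the initial product state $\ket{\hat 1}$, and Lemma~\ref{lem:update} guarantees that the state stays in matrix product form after each single- and two-qudit update; the final matrix product $B^{[1]}_1\cdots B^{[N]}_1$ reads off the coefficient of $X_1\cdots X_N$ in $\Phi_A(X_1\cdots X_N)$, which by the opening lemma relating $\per(A)$ to $\Phi_A$ equals $\per(A)$. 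Hence everything reduces to bounding $d$ and $D_{max}$ and then summing the per-gate costs of Lemma~\ref{lem:update}.

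First I would bound the physical dimension. Lemma~\ref{lem:local-deg} states that after $i$ layers the local degree at every site satisfies $n_k\le 2^i$, so throughout the computation the physical dimension never exceeds $2^L+1=O(2^L)$. Next I would bound the bond dimension by induction on the layer index. Let $D^{(i)}$ denote the maximal bond dimension after layer $i$, so $D^{(0)}=1$. The key point is that the $2\times 2$ blocks of a single factor $F_i$ act on disjoint adjacent pairs of sites, so within one layer every bond $D_k$ is touched by at most one two-qudit gate and the update of Lemma~\ref{lem:growdeg} uses only the neighbouring bonds as they stood at the end of layer $i-1$; there is no compounding inside a layer. Combining this with the physical-dimension bound $d+1\le 2^L+1$, Lemma~\ref{lem:growdeg} yields the recursion $D^{(i)}\le (2^L+1)\,D^{(i-1)}$, and hence $D_{max}=D^{(L)}\le (2^L+1)^L=2^{L^2+O(L)}$.

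With these two bounds in hand the cost accounting is routine. Each factor $F_i$ contains at most $N$ blocks, so the circuit uses at most $NL$ gates in total. By Lemma~\ref{lem:update} a single-qudit update costs $O((d+1)^2 D_{max}^2)$ and a two-qudit update (dominated by the SVD of Fig.~\ref{fig:transform}) costs $O\big(((d+1)D_{max})^3\big)$; the final contraction $B^{[1]}_1\cdots B^{[N]}_1$ is a sequential product of $N$ matrices of size at most $D_{max}\times D_{max}$ and costs $O(N D_{max}^2)$, which is subdominant. Substituting $d+1=O(2^L)$ and $D_{max}=2^{L^2+O(L)}$, the dominant two-qudit cost per gate is $((d+1)D_{max})^3=2^{3L^2+O(L)}$, so the total time is $O(N)\cdot 2^{3L^2+O(L)}=O(N\,2^{3L^2})$ once the polynomial factor $L$ and the lower-order exponential $2^{O(L)}$ are treated as part of the (fixed-$L$) constant. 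The space is dominated by the MPS itself, $\sum_k (d+1)D_{k-1}D_k\le N(d+1)D_{max}^2=N\cdot 2^{2L^2+O(L)}=O(N\,2^{2L^2})$.

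The main obstacle is the bond-dimension estimate of the second paragraph, and specifically getting the recursion exactly right: one has to use the disjointness of the blocks within a single layer to rule out multiplicative blow-up inside that layer, and one has to feed the correct physical dimension (the post-gate value $d=2^i$, not $2^{i-1}$) into Lemma~\ref{lem:growdeg}. Everything else — counting the $NL$ gates, quoting the per-update costs, and checking that the lower-order $2^{O(L)}$ and $\mathrm{poly}(L)$ factors are swallowed by the exponents $2L^2$ and $3L^2$ — is bookkeeping.
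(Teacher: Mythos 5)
Your proposal is correct and follows essentially the same route as the paper: bound the local physical dimension by $d=2^i$ after layer $i$ (Lemma~\ref{lem:local-deg}), bound the bond dimension layer by layer via Lemma~\ref{lem:growdeg}, and then plug $d+1=O(2^L)$ and $D_{max}=O(2^{L^2})$ into the per-gate costs of Lemma~\ref{lem:update}, suppressing the lower-order $2^{O(L)}$ and $\mathrm{poly}(L)$ factors exactly as the paper does. The only cosmetic difference is that you use the uniform bound $D^{(i)}\le(2^L+1)D^{(i-1)}$, giving $D_{max}\le(2^L+1)^L$, whereas the paper uses the layer-dependent product $\prod_{i=1}^L(2^i+1)\approx 2^{L(L+1)/2}$; both are $O(2^{L^2})$, so the conclusion is unaffected.
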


\begin{proof}
Given the decomposition $A = F_1 \ldots F_L$, we construct the corresponding circuit as outlined in Lemma \ref{lem:circuit}.
The central figure of merit is the bond dimension $D_{max}$. Given the initial state $\ket{\hat{1}}$, this dimension is 
$D_{max} = 1$. However, with the application of each layer this dimension increases according to (c.f Lemma \ref{lem:growdeg})
${D'}_k = \min((d+1)D_{k-1},(d+1)D_{k+1})$. As we have outlined in the algorithm, the degree $d$ does not stay constant 
but grows at each iteration (c.f. Lemma \ref{lem:local-deg}), so that after $i$ layers, we have $d = 2^i$. Thus 
$ D_{max} \leq \Pi_{i=1}^L (2^i + 1) = O(2^{L^2}). $ The bound on $D_{max}$ can be used to give runtime and space bounds according to Lemma \ref{lem:update}. We need to store a total of $N$ MPS tensors each of size at most $O((2^{L^2},2^{L^2},2^L))$. We suppress the linear scaling of $L$ in the exponent. Therefore, in the worst case the space required scales at most as $O(N2^{2L^2})$. Furthermore, due to the runtime bound $O(\min((d+1)D_{k-1},(d+1)D_{k+1})^3)$ for the application of an individual gate, the total time is bounded by $O(N2^{3L^2})$.
\end{proof}
 
The runtime bound as stated in theorem \ref{mainTheorem} scales linearly in the matrix size $N$. The prefactor which is related to the bandwidth $w$ of the matrix via $w \leq L$ scales as $2^{L^2}$. However, observe that the highest dimension $d$ can never exceed $N$, which is the total degree of the full initial polynomial. The aforementioned bound is therefore only a good estimate for the worst case runtime, as long as $2^L \leq N$. When $2^L \geq N$ one can easily see, following the same arguments as in theorem \ref{mainTheorem}, that the runtime has to always be bounded by $O(N(N+1)^{3(L +1)})$. This bound is still polynomial in the matrix dimension. Furthermore, we would like to emphasize, that the stated bounds are only upper bounds for the worst case scenario for the algorithm. In practice it can occur that a large set of the singular values are zero after the gates $G_{k,k+1}$ has been applied. 

%
%

\section{Conclusions}

We have obtained a polynomial time algorithm for computing the permanent of block factorizable matrices.  The ideas behind this algorithm are the expression of the permanent as the transition amplitude in a quantum circuit and the application of matrix product states to evaluate this amplitude.

In the present work, we have limited ourselves to block factorizable matrices, that is, those with short factorizations into block diagonal matrices with blocks of size at most $2$.  These decompositions give rise to quantum circuit whose depths directly corresponds to the length of the factorization; the blocks of each factor correspond to quantum gates of one layer that can be executed in parallel.  To the best of our knowledge, this is the first time that a multiplicative decomposition has employed successfully to obtain an efficient algorithm for computing the permanent of matrices.

In a future publication, we will show how to rely on different techniques from condensed matter and quantum information theory such as matrix product operators \cite{Pirvu10} to obtain polynomial algorithms for computing the permanent of new classes of matrices.


\subsection*{Acknowledgements}
We would like to thank Eduardo Mucciolo and Frank Verstraete for insightful discussions.  
K.T. is grateful for the support from the Erwin Schr\"odinger fellowship, Austrian Science Fund (FWF): J 3219-N16.  P.W. gratefully acknowledges the support from the NSF CAREER Award CCF-0746600.  This work was supported in part by the National Science Foundation
Science and Technology Center for Science of Information, under
grant CCF-0939370.


\end{document}